\numberwithin{equation}{section}
\newcommand{\bea}{\begin{eqnarray}}
\newcommand{\eea}{\end{eqnarray}}
\newcommand{\be}{\begin{eqnarray*}}
\newcommand{\ee}{\end{eqnarray*}}
\newtheorem{theorem}{Theorem}[section]
\begin{document}
\title[Linear Transformations and RIP]{Linear Transformations and\\ Restricted Isometry Property}
%\author{Draft}
\author{Leslie Ying and Yi Ming Zou}
\thanks{L. Ying is with the Department of Electrical Engineering, University of Wisconsin, Milwaukee, WI 53201, USA, email: leiying@uwm.edu}
\thanks{Y. M. Zou is with the Department of Mathematical Sciences, University of Wisconsin, Milwaukee, WI 53201, USA, email: ymzou@uwm.edu}
\thanks{This paper has been submitted to ICASSP 09}
\date{12/16/08 (This version)}

\maketitle

\begin{abstract}
The Restricted Isometry Property (RIP) introduced by Cand\'es and Tao is a fundamental property in compressed sensing theory. It says that if a sampling matrix satisfies the RIP of certain order proportional to the sparsity of the signal, then the original signal can be reconstructed even if the sampling matrix provides a sample vector which is much smaller in size than the original signal. This short note addresses the problem of how a linear transformation will affect the RIP. This problem arises from the consideration of extending the sensing matrix and the use of compressed sensing in different bases. As an application, the result is applied to the redundant dictionary setting in compressed sensing.  
\end{abstract}
%%%%%%%%%%%%%%%%%%%%%%%%%%%%%%%%%%%%%%%%%%%%%%%%%%%%%%%%%%%%%%%%%%%%%%%%%%%%%%%%%%%%%
%%%%%%%%%%%%%%%%%%%%%%%%%%%%%%%%%%%%%%%%%%%%%%%%%%%%%%%%%%%%%%%%%%%%%%%%%%%%%%%%%%%%%%%
\section{Introduction}
In Compressed Sensing (CS), one considers the problem of recovering a vector (discrete signal) $x\in \mathbb{R}^N$ from its 
linear measurements $y$ of the form 
\bea\label{linmeas} y_i =<x,\varphi_i>,\;1\le i\le n, \eea
with $n<<N$.  If $x$ is sparse, CS theory says that one can actually recover $x$ from $y$ which is much smaller 
in size than $x$ by solving a convex program with a suitably chosen set of sampling row vectors
$\{\varphi_i| 1\le i\le n\}$ \cite{Cand1}\cite{Donoho}\cite{Cand3}. The linear system (\ref{linmeas}) can be written in the form of matrix multiplication 
\bea y=\Phi x, \eea
where $\Phi$ is an $n\times N$ matrix formed by the row vectors $\varphi_i$ called the sampling matrix. One of the Conditions that ensures the performance of the sampling matrix $\Phi$ is the RIP.  
A matrix $\Phi\in\mathbb{R}^{n\times N}$ is said to satisfy the RIP of order
$k\in\mathbb{N}$ and {\it isometry constant} $\delta_k\in(0,1)$ if
\bea\label{rip}
(1-\delta_k)\|z\|_2^2\le\|\Phi_T z\|_2^2\le(1+\delta_k)\|z\|_2^2,
\qquad \forall z\in\mathbb{R}^{|T|},
\eea
where $T\subset\{1,2,\ldots,N\}$ satisfying $|T|\le k$, and $\Phi_T$ denotes
the matrix obtained by retaining only the columns of $\Phi$
corresponding to the entries of $T$. Condition (\ref{rip}) is equivalent to the condition that all the matrices $\Phi_{T}^{t}\Phi_T$ have their eigenvalues in $[1-\delta_k, 1+\delta_k]$. For any matrix $X\in \mathbb{R}^{r\times s}$ and any $k\in\mathbb{N}$, we denote the corresponding isometry constant of $X$ by $\delta_k(X)$. If there is no confusion, we will just write $\delta_k$. In particular, we always use $\delta_k$ for the matrix $\Phi$.
\par
A theorem due to Cand\'es, Romberg, and Tao \cite{Cand2} says that if
$\Phi$ satisfies the RIP of order $3k$, then the solution $\hat{x}$ of the following convex minimization
problem
\bea
\text{min} \|x\|_{1} \ \ \ \ \ \text{ subject to }\  \Phi x=y,
\eea
satisfies (see also \cite{Baraniuk})
\bea
\|x-\hat{x}\|_{2}\le \frac{C_2\sigma_k(x)}{\sqrt{k}},
\eea
where $\sigma_k(x)$ is the $\ell_1$ error of the best $k$-term approximation, and $C_2$ is a constant depending only
on $\delta_{3k}\in (0,1)$.\footnote{It should be noted that the RIP is only a sufficient condition for reconstruction. If $\Phi$ satisfying the RIP, $cA$ may not satisfy the RIP for $c\ne 0$. However, it is clear that both $A$ and $cA$ lead to similar sparse recovery using $\ell_1$ program. However, this issue is beyond the current scope \cite{DG}.}
\par
A condition that ensures a random matrix satisfies the RIP with high probability is given by the concentration of measure 
inequality. An $n\times N$ random matrix $\Phi$ is said to satisfy the concentration of measure inequality if for any
$x\in \mathbb{R}^N$,
\bea\label{cmi}
P(|\|\Phi x\|^2_2 - \|x\|_2^2|\ge \varepsilon \|x\|_2^2)\le 2e^{-nc_0(\varepsilon)},
\eea
where $\varepsilon\in (0,1)$, and $c_0(\varepsilon)$ is a constant depending only on $\varepsilon$.
\par
The random matrices $\Phi=(r_{ij})$ generated by the following probability distributions are known to satisfy the concentration of measure inequality with $c_0(\varepsilon)=\varepsilon^2/4-\varepsilon^3/6$ \cite{Baraniuk}:
\begin{equation*}
r_{ij}\sim N\left(0,\frac{1}{n}\right),
\end{equation*}
\begin{equation}\label{eqn:distr}
r_{ij}=\left\{
\begin{array}{ccc}
\displaystyle\frac{1}{\sqrt{n}}&\text{with probability} &1/2\\
-\displaystyle\frac{1}{\sqrt{n}}&\text{with probability} &1/2\\
\end{array}
\right..
\end{equation}
\par
According to Theorem 5.2 in \cite{Baraniuk}\footnote{In the proof given in \cite{Baraniuk}, the constant $c_1$ was first chosen such that $a:=c_0(\varepsilon)\delta/2-c_1[1+(1+\log\frac{12}{\delta})/\log\frac{N}{k}]>0$, then the constant $c_2$ was chosen such that $0<c_2<a$. Thus the constants depend also on $\varepsilon$.}, for given integers $n$ and $N$, and $0<\delta<1$, 
if the probability distribution generating the $n\times N$ matrices $\Phi$ satisfies 
the concentration inequality (\ref{cmi}), then there exist constants $c_1, c_2 > 0$ depending only on $\delta$ such 
that the RIP holds for $\Phi$ with the prescribed $\delta$ and any 
\bea\label{krange} k\le c_1n/\log(N/k) \eea
with probability $\ge 1-e^{-2c_2n}$. Furthermore, this RIP for $\Phi$ is universal in sense that it holds with respect to any orthogonal basis used in the measurement. 
\par
There are also deterministic constructions of matrices satisfying the RIP \cite{DeVore}\cite{VS}\cite{HS}\cite{GHS}. 
\par
For application purposes, one often needs to analyze the RIP constants of the products of a matrix $\Phi$ with known RIP constant $\delta$ and other matrices. For example, when one considers different bases or redundant dictionaries 
under which the signals of interest are sparse, matrices of the form $\Phi B$ needs to be analyzed \cite{Donoho}\cite{Rauhut}, where $B$ is given by the basis or the dictionary. For another example, if the size of $\Phi$ is $n\times N$ with $n < N$, one would like to extend $\Phi$ to $A\Phi B$ of size $m\times q$ with $m<n<N<q$ if possible, since that gives a further reduction on the number of measurements one needs to collect: for $\Phi$, the number of measurements is $n$; while for $A\Phi B$, the number of measurements is $m$. 
\par
These situations can be formulated under a more general framework by asking the following question: If a matrix $\Phi$ of size $n\times N$ satisfies the RIP with a given isometry constant $0<\delta<1$ (with certain probability if $\Phi$ is random), and $A,B$ are given matrices of sizes $m\times n$ and $N\times q$ respectively, then what is the isometry constant of the matrix $A\Phi B$?
\par
 In section 2, we first show that if all $\Phi$, $A$, and $B$ are random and satisfy the concentration of measure inequality, then $A\Phi B$ satisfies the concentration of measure inequality, therefore it has RIP. Then we observe that if deterministic matrix is involved, the problem is more complicated, but it can still be analyzed by using the SVDs of $A$ and $B$. It is not possible to multiply by a deterministic $A$ from the left to achieve more reduction on the number of measurements without further assumption. Our result shows that it is possible to extend the matrix $\Phi$ by multiplying a deterministic $B$ from the right to extend $\Phi$ if $\Phi$ is random, though the isometry constant will be changed. This result can be applied to redundant dictionary setting to give a different approach for using CS with redundant dictionaries.
\par
%%%%%%%%%%%%%%%%%%%%%%%%%%%%%%%%%%%%%%%%%%%%%%%%%%%%%%%%%%%%%%%%%%%%%%%%%%%%%%%
%%%%%%%%%%%%%%%%%%%%%%%%%%%%%%%%%%%%%%%%%%%%%%%%%%%%%%%%%%%%%%%%%%%%%%%%%%%%%%%
\section{Main result}
\par
We first consider the random case. Let $\Phi$ be an $n\times N$ matrix satisfying the concentration inequality (\ref{cmi}) with constant $\varepsilon$, and let $A$ (respectively $B$) be a random matrix size $m\times n$ (respectively $N\times q$) satisfying the concentration inequality (\ref{cmi}) with $\varepsilon_1$ (respectively, $\varepsilon_2$). Then we have:
\begin{theorem}
Assume that all $\varepsilon, \varepsilon_1,\varepsilon_2 < 1/3$. The matrix $A\Phi$ satisfies the concentration inequality
\be
P(|\|A\Phi x\|^2_2 - \|x\|_2^2|\ge \varepsilon_3 \|x\|_2^2)\le 2e^{-mc^{\prime}_0},
\ee
where $\varepsilon_3=\varepsilon+\varepsilon_1(1+\varepsilon)$, and $c^{\prime}_0$ is a constant that depends only on $c_0(\varepsilon)$ and $c_0(\varepsilon_1)$ (as defined in (\ref{cmi})). The same statement holds for $\Phi B$ with $\varepsilon_3=\varepsilon+\varepsilon_2(1+\varepsilon)$ and $m$ replaced by $n$.
\end{theorem}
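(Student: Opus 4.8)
The plan is to exploit the independence of $A$ and $\Phi$ together with a union bound over two ``failure'' events: one controlling $\|\Phi x\|_2$ in terms of $\|x\|_2$, and one controlling the action of $A$ on the random but conditionally fixed vector $\Phi x$. Fix a nonzero $x\in\mathbb{R}^N$ (the case $x=0$ is trivial) and let $E_1$ be the event $\bigl|\,\|\Phi x\|_2^2-\|x\|_2^2\,\bigr|\ge\varepsilon\|x\|_2^2$; by (\ref{cmi}) applied to $\Phi$ we have $P(E_1)\le 2e^{-nc_0(\varepsilon)}$. Conditioning on $\Phi$, the vector $w:=\Phi x\in\mathbb{R}^n$ is fixed, and since $A$ is independent of $\Phi$ and satisfies (\ref{cmi}) with $\varepsilon_1$, the event $E_2=\bigl\{\,\bigl|\,\|Aw\|_2^2-\|w\|_2^2\,\bigr|\ge\varepsilon_1\|w\|_2^2\,\bigr\}$ has conditional probability at most $2e^{-mc_0(\varepsilon_1)}$ whatever the value of $\Phi$ is; taking expectations gives $P(E_2)\le 2e^{-mc_0(\varepsilon_1)}$.

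Next I would show that on the complement $(E_1\cup E_2)^c$ the bound $\bigl|\,\|A\Phi x\|_2^2-\|x\|_2^2\,\bigr|\le\varepsilon_3\|x\|_2^2$ holds deterministically. Off $E_1$ one has $(1-\varepsilon)\|x\|_2^2\le\|\Phi x\|_2^2\le(1+\varepsilon)\|x\|_2^2$, and off $E_2$ one has $(1-\varepsilon_1)\|\Phi x\|_2^2\le\|A\Phi x\|_2^2\le(1+\varepsilon_1)\|\Phi x\|_2^2$; multiplying these two chains yields
\[
(1-\varepsilon_1)(1-\varepsilon)\,\|x\|_2^2\ \le\ \|A\Phi x\|_2^2\ \le\ (1+\varepsilon_1)(1+\varepsilon)\,\|x\|_2^2 .
\]
Since $(1+\varepsilon_1)(1+\varepsilon)=1+\varepsilon+\varepsilon_1(1+\varepsilon)=1+\varepsilon_3$ and $(1-\varepsilon_1)(1-\varepsilon)=1-\varepsilon_3+2\varepsilon\varepsilon_1\ge 1-\varepsilon_3$, the asserted two-sided estimate follows. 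The hypothesis $\varepsilon,\varepsilon_1<1/3$ is needed only here, to guarantee $\varepsilon_3=\varepsilon+\varepsilon_1+\varepsilon\varepsilon_1<1$ so that the resulting inequality is a bona fide concentration inequality with constant in $(0,1)$.

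Finally I would assemble the pieces: $P\bigl(\bigl|\,\|A\Phi x\|_2^2-\|x\|_2^2\,\bigr|\ge\varepsilon_3\|x\|_2^2\bigr)\le P(E_1)+P(E_2)\le 2e^{-nc_0(\varepsilon)}+2e^{-mc_0(\varepsilon_1)}$. In the regime of interest $m\le n$ (the point of the left multiplication is that $A$ compresses $\mathbb{R}^n$ to $\mathbb{R}^m$), so $e^{-nc_0(\varepsilon)}\le e^{-mc_0(\varepsilon)}$ and the right-hand side is at most $4e^{-m\mu}$ with $\mu:=\min\{c_0(\varepsilon),c_0(\varepsilon_1)\}$; absorbing the factor $4=2\cdot2$ into the exponent (for instance taking $c_0'=\mu/2$, which is legitimate once $m\ge(2\log 2)/\mu$) gives the stated bound $2e^{-mc_0'}$ with $c_0'$ depending only on $c_0(\varepsilon)$ and $c_0(\varepsilon_1)$. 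The claim for $\Phi B$ is proved identically, conditioning instead on $B$ so that $\Phi$ acts on the fixed vector $Bz$, with the two failure probabilities $2e^{-Nc_0(\varepsilon_2)}$ and $2e^{-nc_0(\varepsilon)}$ and using $n\le N$ in the last step.

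The deterministic chaining in the second paragraph is routine; the only real friction I anticipate is bookkeeping — making the independence of $A$, $\Phi$, and $B$ explicit (it is part of the intended setup but is not spelled out in the statement), and repackaging the sum of two exponential tails into the single clean form $2e^{-mc_0'}$, which is cosmetic but relies on $m$ being the smaller dimension plus a harmless adjustment of the constant.
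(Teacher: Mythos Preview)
Your proposal is correct and follows essentially the same route as the paper: bound the failure probabilities for $\Phi$ acting on $x$ and for $A$ acting on $\Phi x$, then chain the two-sided estimates $(1\pm\varepsilon_1)(1\pm\varepsilon)$ to obtain $\varepsilon_3$, and finally combine the two exponential tails into a single $2e^{-mc_0'}$ using $m\le n$. Your conditioning-on-$\Phi$ argument is in fact a bit more careful than the paper's phrasing (which asserts the bound ``for any $y\in\mathbb{R}^n$'' before specializing to $y=\Phi x$), and your union bound is equivalent to the paper's use of the product $(1-2e^{-mc_0(\varepsilon_1)})(1-2e^{-nc_0(\varepsilon)})$ in the remark following the proof.
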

\begin{proof}
We give the proof for the case of left multiplication by $A$, the proof for the case of right multiplication by $B$ is similar. By assumption, with probability $\ge 1-2e^{-mc_0(\varepsilon_1)}$, the matrix $A$ satisfies 
\be
(1-\varepsilon_1)\|y\|_2^2 <\|A y\|_2^2 <(1+\varepsilon_1)\|y\|_2^2,\quad \mbox{ for any $y\in \mathbb{R}^{n}$}.
\ee
Replacing $y$ by $\Phi x$ ($x\in\mathbb{R}^N$), we have
\bea\label{thm211}
(1-\varepsilon_1)\|\Phi x\|_2^2 <\|A \Phi x\|_2^2 <(1+\varepsilon_1)\|\Phi x\|_2^2.
\eea
Again by assumption, with probability $\ge 1-2e^{-nc_0(\varepsilon)}$, the matrix $\Phi$ satisfies 
\bea\label{thm212}
(1-\varepsilon)\|x\|_2^2 <\|\Phi x\|_2^2 <(1+\varepsilon)\|x\|_2^2,\quad \mbox{for any $ x\in \mathbb{R}^{N}$}.
\eea
Now the statement follows by combining (\ref{thm211}) and (\ref{thm212}).
\end{proof}
\par\medskip
\noindent\textbf{Remark.} If $m\le n$, the constant $c^{\prime}_0$ in Theorem 2.1 can be roughly estimated by the inequality $c^{\prime}_0\le c_0(\varepsilon^{\prime})-\log 2/m $, where $c_0(\varepsilon^{\prime})= min\{c_0(\varepsilon_1), c_0(\varepsilon)\}$. This is obtained from 
\be
1-(1-2e^{-mc_0(\varepsilon_1)})(1-2e^{-nc_0(\varepsilon)})\le 2e^{-m(c_0(\varepsilon^{\prime})-\log 2/m)}.
\ee
 More precise estimation can be carried out, but we are not concerning this point here.
\par
\medskip
Now we consider the cases when deterministic matrices are involved. We observe that it is not possible to multiply a deterministic matrix $A$ from the left to extend the sensing matrix to achieve further reduction in sampling without other assumptions. To see this, we consider the SVD of $A$.
\par
For any positive integer $d$, let $O(d)$ be the set of $d\times d$ orthogonal matrices. There exists $U\in O(n)$ such that 
\bea
A^tA=U^t\left(\begin{array}{cccc} 
\sigma_1 & {} & {} & {}\\
{} & \sigma_{2} & {} & {}\\
{} & {} & \ddots & {}\\
{} & {} & {} & \sigma_{n} \end{array}\right)U,
\eea
where $\sigma_1\ge\sigma_2\ge\cdots\ge\sigma_n\ge 0$. Since for any $T\subset\{1,2,\ldots,N\}$, $(A\Phi)_T=A\Phi_T$, we have
\bea\label{svd1}
(A\Phi)_T^t(A\Phi)_T 
&=& \Phi_T^tA^tA\Phi_T\\\nonumber
&=& \Phi_T^tU^t\left(\begin{array}{cccc} 
\sigma_1 & {} & {} & {}\\
{} & \sigma_{2} & {} & {}\\
{} & {} & \ddots & {}\\
{} & {} & {} & \sigma_{n} \end{array}\right)U\Phi_T\\\nonumber
&=& (U\Phi)_T^t\left(\begin{array}{cccc} 
\sigma_1 & {} & {} & {}\\
{} & \sigma_{2} & {} & {}\\
{} & {} & \ddots & {}\\
{} & {} & {} & \sigma_{n} \end{array}\right)(U\Phi)_T.
\eea
If $m<n$, then $\sigma_{m+1}=\cdots=\sigma_n=0$, and hence 
\be
\left(\begin{array}{cccc} 
\sigma_1 & {} & {} & {}\\
{} & \sigma_{2} & {} & {}\\
{} & {} & \ddots & {}\\
{} & {} & {} & \sigma_{n} \end{array}\right)(U\Phi)_T=\left(\begin{array}{c} A_1 \\ 0 \end{array}\right)
\ee
for a suitable block matrix $A_1$. From the last matrix one can see immediately that RIP fails.
\par
 If $m\ge n$, then we can change $\Phi$ by multiplying $A$ from the left if $A$ has full column rank. Since under this assumption, all $\sigma_i>0$. Note that $U\Phi$ has the same RIP as $\Phi$, so if $\delta_k$ is the RIP constant of $\Phi$ corresponding to all $T$ of size $k\le N$, we can bound the RIP constant of $A\Phi$ by $\sigma_n(1-\delta_k)$ and $\sigma_1(1+\delta_k)$. In fact, for $z\in\mathbb{R}^k$, if we let $U\Phi_Tz=y=(y_1,\ldots,y_n)^t$, then $\|y\|_2=\|\Phi_T z\|_2$, and according to (\ref{svd1})
\bea
\sigma_n\|y\|_2^2\le\|A\Phi_T z\|_2^2 = \sum_{i=1}^n\sigma_iy_i^2\le\sigma_1\|y\|_2^2.
\eea
Thus we have (use (\ref{rip}))
\bea\label{rip1}
\sigma_n(1-\delta_k)\|z\|_2^2\le\|A\Phi_T z\|_2^2\le\sigma_1(1+\delta_k)\|z\|_2^2.
\eea
Note that the above analysis works whether $\Phi$ is random or deterministic. 
\par\medskip
\medskip
Next, we consider the product $\Phi B$. In this case, we need to distinguish between random matrix $\Phi$ and deterministic matrix $\Phi$. Assume that $\Phi$ is a random matrix satisfying the concentration inequality (\ref{cmi}) and hence satisfying the RIP inequality (\ref{rip}) with probability $\ge p$. Note that the concentration inequality is invariant under the right multiplication by an orthogonal matrix. That is, if $U\in O(N)$, then $\Phi U$ also satisfies (\ref{rip}) with probability $\ge p$.
\par
Let $B$ be an $N\times q$ matrix. To make the argument clearer, we assume that $T\subset\{1,2,\ldots,q\}$ with $|T|= k<N$ (note that this is sufficient for our purpose). We have $U\in O(N)$ and $V\in O(k)$ such that
\bea
B_T=U\left(\begin{array}{l} D\\ 0\end{array}\right)_{N\times k}V,
\eea
where
\be
D=\left(\begin{array}{cccc} 
\lambda_1 & {} & {} & {}\\
{} & \lambda_{2} & {} & {}\\
{} & {} & \ddots & {}\\
{} & {} & {} & \lambda_{k} \end{array}\right),\quad \lambda_1\ge\lambda_2\ge\cdots\ge\lambda_k\ge 0.
\ee
\par
For $x\in \mathbb{R}^k$,
\bea
\Phi B_Tx=\Phi U\left(\begin{array}{l} D\\ 0\end{array}\right)_{N\times k}Vx.
\eea
Let 
\be
z=\left(\begin{array}{l} D\\ 0\end{array}\right)Vx\in \mathbb{R}^N.
\ee
Then $z$ is $k$-sparse (the last $N-k$ entries are always $0$). Thus, since $\Phi U$ has the same RIP as $\Phi$, we have
\bea\label{rip2}
(1-\delta_k)\|z\|_2^2\le\|\Phi B_Tx\|_2^2=\|\Phi Uz\|_2^2\le(1+\delta_k)\|z\|_2^2
\eea
with probability $\ge p$. 
\par
Let $y=(y_1,\ldots, y_k)^t=Vx$, then $\|y\|_2=\|x\|_2$, and 
\be
\|z\|_2^2 &=& x^tV^t(D^t\;\; 0)\left(\begin{array}{c} D\\ 0\end{array}\right)Vx\\
        {}&=& y^t\left(\begin{array}{cccc} 
\lambda_1^2 & {} & {} & {}\\
{} & \lambda_{2}^2 & {} & {}\\
{} & {} & \ddots & {}\\
{} & {} & {} & \lambda_{k}^2 \end{array}\right)y
        = \sum_{i=1}^k\lambda_i^2y_i^2.
\ee
Since
\be
\lambda_k^2\|y\|_2^2\le \sum_{i=1}^k\lambda_i^2y_i^2\le\lambda_1^2\|y\|_2^2,
\ee
by (\ref{rip2}), we have
\bea\label{rip3}
\lambda_k^2(1-\delta_k)\|x\|_2^2\le\|\Phi B_Tx\|_2^2\le\lambda_1^2(1+\delta_k)\|x\|_2^2
\eea
with probability $\ge p$. 
\par
If $\Phi$ is deterministic, then for arbitrary $U\in O(N)$, $\Phi U$ may not satisfy the same RIP as $\Phi$, and we do not have a good analysis of $\Phi B$ for this case at the moment. Summarize our discussion, we have:
\begin{theorem} 
Notation as before. 
\par
(1) If $A$ is deterministic, then regardless whether $\Phi$ is random or deterministic, $A\Phi$ has RIP if and only if $A$ has full column rank. If that is the case, the RIP constant of $A\Phi$ can be obtained from (\ref{rip1}). If $\Phi$ is random, then the probability for $A\Phi$ to satisfy RIP is the same as that of $\Phi$ (with possible different RIP constant).
\par
(2) If $\Phi$ is a random matrix satisfying the concentration inequality (\ref{cmi}) (hence satisfying the RIP (\ref{rip}) with probability at least $p$), and $B$ is an $N\times q$ deterministic matrix such that $\delta_k(B)\in(0,\frac{2}{1+\delta_k})$, then with probability at least 
\bea\label{thm22}
1-\left(\begin{array}{c} q\\ k \end{array}\right)(1-p),
\eea
the matrix $\Phi B$ satisfies the RIP with the same order as that of $\Phi$ and a possible different RIP constant $\delta_k(\Phi B)$ determined by (\ref{rip3}). 
\end{theorem}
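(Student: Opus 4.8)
The plan is to assemble the computations already carried out above: part (1) is read off directly from the singular value identity (\ref{svd1}), and part (2) from (\ref{rip3}) together with a union bound over the $\binom{q}{k}$ column sets of $B$.

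\emph{Part (1).} Fix $T$. By (\ref{svd1}), $(A\Phi)_T^t(A\Phi)_T=(U\Phi)_T^t\,\mathrm{diag}(\sigma_1,\ldots,\sigma_n)\,(U\Phi)_T$ with $U\in O(n)$ and $\sigma_1\ge\cdots\ge\sigma_n\ge 0$ the eigenvalues of $A^tA$, and $A$ has full column rank precisely when $\sigma_n>0$. I would treat the two implications separately. If $\sigma_n>0$, put $y=(U\Phi)_Tz$, so that $\|y\|_2=\|\Phi_Tz\|_2$ because $U$ is orthogonal; sandwiching $\sum_i\sigma_iy_i^2$ between $\sigma_n\|y\|_2^2$ and $\sigma_1\|y\|_2^2$ and then inserting the RIP (\ref{rip}) of $\Phi$ gives precisely (\ref{rip1}). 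The only probabilistic input was the RIP of $\Phi$, so in the random case this holds on the same event, hence with the same probability; in the deterministic case it holds outright. If instead $\sigma_n=0$ (which is forced whenever $m<n$), then the zero block exhibited just before (\ref{rip1}) makes $(A\Phi)_T^t(A\Phi)_T$ singular as soon as $|T|$ exceeds $\mathrm{rank}(A)$, so the lower bound in (\ref{rip}) cannot hold for $A\Phi$. Thus $A\Phi$ retains the RIP of $\Phi$ exactly when $A$ has full column rank.

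\emph{Part (2): the per-set estimate.} Fix $T\subset\{1,\ldots,q\}$ with $|T|=k<N$ and take the SVD $B_T=U_T\left(\begin{smallmatrix}D\\0\end{smallmatrix}\right)V$, with $U_T\in O(N)$, $V\in O(k)$, $D=\mathrm{diag}(\lambda_1,\ldots,\lambda_k)$, $\lambda_1\ge\cdots\ge\lambda_k\ge 0$. For $x\in\mathbb{R}^k$ one has $\Phi B_Tx=(\Phi U_T)z$ with $z=\left(\begin{smallmatrix}D\\0\end{smallmatrix}\right)Vx$, which is $k$-sparse in $\mathbb{R}^N$. The key point, already recorded, is that $\Phi U_T$ still satisfies the concentration inequality (\ref{cmi}) with the same $c_0$: for a fixed vector, replacing it by $U_Tx$ changes only the (fixed) input while preserving its norm, and (\ref{cmi}) is assumed for every vector. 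Hence, by Theorem 5.2 of \cite{Baraniuk}, on an event of probability $\ge p$ the matrix $\Phi U_T$ satisfies the RIP of order $k$ with the same isometry constant $\delta_k$. Applying that RIP to the $k$-sparse $z$ gives (\ref{rip2}), and then $y=Vx$ (so $\|y\|_2=\|x\|_2$) together with $\lambda_k^2\|y\|_2^2\le\sum_i\lambda_i^2y_i^2\le\lambda_1^2\|y\|_2^2$ turns (\ref{rip2}) into (\ref{rip3}) for this $T$. The hypothesis $\delta_k(B)\in(0,\tfrac{2}{1+\delta_k})$ enters only here: it bounds $\lambda_1^2\le 1+\delta_k(B)$ and keeps $\lambda_k$ bounded away from $0$ uniformly in $T$, which is what makes the constants $\lambda_k^2(1-\delta_k)$ and $\lambda_1^2(1+\delta_k)$ of (\ref{rip3}) yield a genuine isometry constant $\delta_k(\Phi B)\in(0,1)$.

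\emph{Part (2): the union bound.} The one genuinely new step, and the source of the binomial factor in (\ref{thm22}), is that the orthogonal matrix $U_T$ coming from the SVD depends on $T$, so the RIP of $\Phi$ cannot be invoked just once, in contrast to part (1) where $A$ (and hence its SVD) is fixed. I would let $E_T$ be the event that $\Phi U_T$ satisfies the RIP of order $k$; each $P(E_T)\ge p$ by the previous paragraph, and there are $\binom{q}{k}$ such sets $T$. The union bound gives $P\big(\bigcup_T E_T^{\,c}\big)\le\binom{q}{k}(1-p)$, so on the complementary event, of probability $\ge 1-\binom{q}{k}(1-p)$, the estimate (\ref{rip3}) holds simultaneously for every $T$ with $|T|=k$; taking the extreme constants over these $T$ produces the RIP of $\Phi B$ of order $k$ with the constant $\delta_k(\Phi B)$ described in (\ref{rip3}), as claimed. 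I expect this union bound — the $T$-dependence of $U_T$ — to be the only real obstacle; it is handled crudely here (a covering/net argument applied to each $\Phi U_T$ individually would give a much better probability), but it suffices for the stated result.
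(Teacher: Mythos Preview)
Your proposal is correct and follows exactly the paper's own argument: the paper has no separate proof for this theorem, but rather presents it as a summary of the preceding discussion, which is precisely the SVD computation of $A^tA$ leading to (\ref{rip1}) for part (1), and the SVD of $B_T$ together with the orthogonal invariance of (\ref{cmi}) leading to (\ref{rip3}) for part (2). The only point you make explicit that the paper leaves implicit is the union bound over the $\binom{q}{k}$ choices of $T$ (accounting for the $T$-dependence of $U_T$), which is indeed the source of the binomial factor in (\ref{thm22}).
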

\par
%%%%%%%%%%%%%%%%%%%%%%%%%%%%%%%%%%%%%%%%%%%%%%%%%%%%%%%%%%%%%%%%%%%%%%%%%%%%%%%%%%%%%%
%%%%%%%%%%%%%%%%%%%%%%%%%%%%%%%%%%%%%%%%%%%%%%%%%%%%%%%%%%%%%%%%%%%%%%%%%%%%%%%%%%%%%%%
\section{Redundant Bases in Compressed Sensing}
\par
In this section, we apply Theorem 2.2 to redundant bases setting in compressed sensing. From (\ref{krange}), we see that for given $N$ and $k$, the random matrices of size $n\times N$ generated by the distributions described in (\ref{eqn:distr}) satisfy the RIP with high probability as long as
\be
n\ge Ck\log(N/k)\quad\mbox{for some constant $C$}.
\ee
Therefore it is desirable to reduce the integer $k$, i.e. to increase the sparsity level of the signal, by considering redundant bases (or redundant dictionaries). Recall that if a set of vectors $\mathbf B$ spans a vector space $V$, then we call $\mathbf B$ a basis if $\mathbf B$ is linearly independent and call $\mathbf B$ a redundant basis otherwise. To apply compressed sensing to a signal $y\in \mathbb{R}^N$ that has a sparse representation $x$ under a redundant basis $\mathbf B$ of size $q> N$, we need to consider how the combination of a good sensing matrix with a redundant basis affects the RIP.
\par 
Let $B$ be the matrix corresponds to the redundant basis $\mathbf B$. Then $B$ is of size $N\times q$ and $y=Bx$ with $x\in \mathbb{R}^q$ sparse. This problem has been considered in \cite{Donoho}\cite{Rauhut}. In particular, in \cite{Rauhut}, a detailed analysis of the situation was given. According to Theorem 2.2 in \cite{Rauhut}, if $\Phi$ satisfies the concentration inequality (\ref{cmi}) with \footnote{There should be a factor $S$ (which is our $k$) for the term $\log(e(1+12/\delta))$ in the bound for $n$ given in \cite{Rauhut}. This affects some later estimates in \cite{Rauhut}.}
\bea\label{nrange}
n\ge C\delta_k^{-2}[k(\log(N/k)+\log e(1+12/\delta_k))+\log 2+t],
\eea
for some $\delta_k\in (0,1)$ and $t>0$, then with probability at least $1-e^{-t}$, the restricted isometry constant of $\Phi B$ satisfies
\bea\label{dd}
\delta_k(\Phi B)\le \delta_k(B) +\delta_k(1+\delta_k(B)).
\eea
\par
We now apply Theorem 2.2 to obtain a similar result.
\par
\begin{theorem} Notation as above. With the isometry constant satisfying
\bea
\delta_k(\Phi B)\le \delta_k(B) +\delta_k(\Phi)(1+\delta_k(B))
\eea
and the probability bound given by (\ref{thm22}), the matrix $\Phi B$ satisfies the RIP with the same order as that of $\Phi$.
\end{theorem}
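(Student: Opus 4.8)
The plan is to read Theorem~3.1 off from part~(2) of Theorem~2.2; the only substantive step is to translate the singular-value estimate~(\ref{rip3}) into a statement about the isometry constant $\delta_k(B)$ instead of the extreme singular values of $B_T$.

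First I would fix $T\subset\{1,\dots,q\}$ with $|T|=k<N$ and use the singular value decomposition $B_T=U\left(\begin{smallmatrix}D\\0\end{smallmatrix}\right)V$ with $\lambda_1\ge\cdots\ge\lambda_k\ge 0$ already introduced in the derivation leading to Theorem~2.2. As computed there, $\|B_Tx\|_2^2=\sum_{i=1}^k\lambda_i^2 y_i^2$ with $y=Vx$ and $\|y\|_2=\|x\|_2$, hence $\lambda_k^2\|x\|_2^2\le\|B_Tx\|_2^2\le\lambda_1^2\|x\|_2^2$ with both ends attained; invoking the RIP of $B$ then gives $1-\delta_k(B)\le\lambda_k^2$ and $\lambda_1^2\le 1+\delta_k(B)$. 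Substituting these two inequalities into (\ref{rip3}) (whose $\delta_k$ is $\delta_k(\Phi)$) shows that, on the event that (\ref{rip3}) holds for this $T$, $\|\Phi B_Tx\|_2^2$ lies between $(1-\delta_k(B))(1-\delta_k(\Phi))\|x\|_2^2$ and $(1+\delta_k(B))(1+\delta_k(\Phi))\|x\|_2^2$.

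Then comes the elementary algebra: $(1+\delta_k(B))(1+\delta_k(\Phi))=1+\bigl[\delta_k(B)+\delta_k(\Phi)(1+\delta_k(B))\bigr]$ exactly, while $(1-\delta_k(B))(1-\delta_k(\Phi))=1-\bigl[\delta_k(B)+\delta_k(\Phi)(1+\delta_k(B))\bigr]+2\delta_k(B)\delta_k(\Phi)\ge 1-\bigl[\delta_k(B)+\delta_k(\Phi)(1+\delta_k(B))\bigr]$ since $2\delta_k(B)\delta_k(\Phi)\ge 0$. Thus, on that event, $\Phi B_T$ satisfies (\ref{rip}) with the single constant $\delta:=\delta_k(B)+\delta_k(\Phi)(1+\delta_k(B))$. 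Taking the union over the $\binom{q}{k}$ sets $T$ of size $k$ exactly as in the proof of Theorem~2.2 gives the probability bound~(\ref{thm22}); and since on this event the bound holds simultaneously for every such $T$, it passes to all $|T|\le k$ by deleting columns (singular-value interlacing), so the order of the RIP of $\Phi B$ is the same as that of $\Phi$.

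The point that genuinely needs care is that $\delta$ be a legitimate isometry constant, i.e., $\delta<1$: this is where a size restriction on $\delta_k(B)$ must enter, since $\delta<1$ is equivalent to $\delta_k(B)<\frac{1-\delta_k(\Phi)}{1+\delta_k(\Phi)}$, and I would state the theorem under this condition. Apart from this I foresee no obstacle: once Theorem~2.2 is granted, the argument is bookkeeping. It is also worth remarking that the resulting bound has the same algebraic form as Rauhut's estimate~(\ref{dd}), with the prescribed concentration constant there replaced by the actual $\delta_k(\Phi)$, the trade-off being the weaker combinatorial probability~(\ref{thm22}) in place of $1-e^{-t}$.
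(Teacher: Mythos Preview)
Your proposal is correct and follows essentially the same route as the paper. The paper's proof is a one-liner: it simply observes that the extreme singular values in (\ref{rip3}) satisfy $1-\delta_k(B)\le\lambda_k^2\le\lambda_1^2\le 1+\delta_k(B)$, which is exactly your key step; your additional algebra and union-bound bookkeeping merely spell out what Theorem~2.2 already packages, and your size restriction $\delta_k(B)<\frac{1-\delta_k(\Phi)}{1+\delta_k(\Phi)}$ is the sharp form of the hypothesis in Theorem~2.2(2).
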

\begin{proof} One just needs to note that the numbers $\lambda_k$ and $\lambda_1$ which appear in (\ref{rip3}) satisfy
\be
1-\delta_k(B)\le \lambda_k^2\le\lambda_1^2\le 1+\delta_k(B).
\ee
\end{proof} 
\par
\medskip
For examples of redundant bases satisfying the condition in Theorem 3.1, we refer the readers to \cite{Rauhut}.
\section{Conclusion and Discussion}
\par
We analyzed the problem of how the multiplication of a matrix to a good sensing matrix affects its RIP. This type of problems arise in CS when one wants to extend the sensing matrix by taking the product of the sensing matrix with another matrix. A particular interesting example is the application of CS under the redundant bases setting. Our result in this short note provides some basic theory for further investigation on the RIP and its applications in CS under different settings. Future work includes constructing good redundant bases, which is related to constructing good deterministic sensing matrices, and analyzing their properties under CS.
\par
\section{Acknowledgments} 
This work is supported in part by the University of Wisconsin System Applied Research Grant and the University of Wisconsin-Milwaukee Research Growth Initiative Grant.
\par
\medskip


\begin{thebibliography}{1}
 \bibitem{Cand1} E. Cand\`{e}s, J. Romberg, and T. Tao, ``Robust uncertainty principles: Exact sinal reconstruction from highly incomplete frequency information", \textit{IEEE Trans. Inf. Theory 52}, no. 2, pp. 489-509, 2006.
 \bibitem{Donoho} D. Donoho, ``Compressed Sensing", \textit{IEEE Trans. Information Theory 52}, no. 4, pp. 1289-1306, 2006.
 \bibitem{Cand3} E. Cand\`{e}s and T. Tao, ``Decoding by linear programming", \textit{IEEE Trans. Inf. Theory 51}, no. 12, pp. 4203-4215, 2005.
 \bibitem{Cand2} E. Cand\`{e}s, J. Romberg, and T. Tao, ``Stable signal recovery from incomplete and inaccurate measurements", \textit{Comm. Pure Appl. Math. 59}, no. 8, pp. 1207--1223, 2006.
 \bibitem{Baraniuk} R. Baraniuk, M. Davenport, R. DeVore, and M. Wakin, ``A Simple Proof of the Restricted Isometry Property for Random Matrices", (2007), to appear in Constructive Approximation.
 \bibitem{DG} M. Davies and R. Gribonval, ``Restricted isometry constants where $\ell^{p}$ sparse recovery can fail for $0 <
p\le 1$, preprint, 2008, available at http://www.compressedsensing.com/.
 %\bibitem{Achl} D. Achlioptas, ``Database-friendly Random Projections", \textit{Proc. ACM SIGMOD-SIGACT-SIGART Symp. on   
 %Principles of Database Systems}, pp. 274-281, 2001.
 \bibitem{DeVore} R. DeVore, ``Deterministic Constructions of Compressed Sensing Matrices",
 \textit{J. of Complexity 23}, pp. 918-925, 2007.
 \bibitem{VS} V. Saligrama, ``Deterministic Designs with deterministic guarantees: Toeplitz compressed sensing matrices, sequence design and system identification", \textit{arXiv:0806.4958v1}, 6/30/2008.
 \bibitem{HS} M. A. Herman and T. Strohmer, ``High-resolution radar via compressed sensing", \textit{arXiv:0803.2257v1}, 3/14/2008.
 \bibitem{GHS} S. Gurevich, R. Hadani, and N. Sochen, ``On some deterministic dictionaries supporting sparsity", to appear in the special issure of JFAA on sparsity, 9/1/2007.
 \bibitem{Rauhut} H. Rauhut, K. Schnass, and P. Vandergheynst, ``Compressed sensing and redundant dictionaries", \textit{IEEE Trans. inf. Theory 54}, no. 5, pp. 2210-2219, 2008.
\end{thebibliography}
\end{document}